\newenvironment{varalgorithm}[1]
  {\algorithm}
  {\endalgorithm}
\newenvironment{list4}{
	\begin{list}{$\bullet$}{%
			\setlength{\itemsep}{0.05cm}
			\setlength{\labelsep}{0.2cm}
			\setlength{\labelwidth}{0.3cm}
			\setlength{\parsep}{0in} 
			\setlength{\parskip}{0in}
			\setlength{\topsep}{0in} 
			\setlength{\partopsep}{0in}
			\setlength{\leftmargin}{0.16in}}}
	{\end{list}}
\newenvironment{list4a}{
	\begin{list}{$\bullet$}{%
			\setlength{\itemsep}{0.05cm}
			\setlength{\labelsep}{0.2cm}
			\setlength{\labelwidth}{0.3cm}
			\setlength{\parsep}{0in} 
			\setlength{\parskip}{0in}
			\setlength{\topsep}{0in} 
			\setlength{\partopsep}{0in}
			\setlength{\leftmargin}{0.16in}}}
	{\end{list}}
\let\mathbb=\mathds 
\newtheorem{theorem}{Theorem}
\newtheorem{assum}{Assumption}
\newtheorem{remark}{Remark}
\begin{document}

\title{\LARGE \bf Distributed Optimization via Gradient Descent with \\ Event-Triggered Zooming over Quantized Communication}

\author{Apostolos~I.~Rikos, Wei~Jiang, Themistoklis~Charalambous, and Karl~H.~Johansson
\thanks{Apostolos~I.~Rikos is with the Department of Electrical and Computer Engineering, Division of Systems Engineering, Boston University, Boston, MA 02215, US. E-mail: {\tt arikos@bu.edu}.}
\thanks{Wei Jiang resides in Hong Kong, China. Email: {\tt wjiang.lab@gmail.com}.}
\thanks{T.~Charalambous is with the Department of Electrical and Computer Engineering, School of Engineering, University of Cyprus, 1678 Nicosia, Cyprus.  
He is also with the Department of Electrical Engineering and Automation, School of Electrical Engineering, Aalto University, Espoo, Finland. 
E-mail:{\tt charalambous.themistoklis@ucy.ac.cy}.
}
\thanks{K.~H.~Johansson is with the Division of Decision and Control Systems, KTH Royal Institute of Technology, SE-100 44 Stockholm, Sweden. He is also affiliated with Digital Futures. E-mail: {\tt kallej@kth.se}.}
\thanks{Part of this work was supported by the Knut and Alice Wallenberg Foundation, the Swedish Research Council, and the Swedish Foundation for Strategic Research. 
The work of T. Charalambous was partly supported by the European Research Council (ERC) Consolidator Grant MINERVA (Grant agreement No. 101044629).
}
}

\maketitle
\pagestyle{empty}

%
%
%
%
\begin{abstract}
In this paper, we study unconstrained distributed optimization strongly convex problems, in which the exchange of information in the network is captured by a directed graph topology over digital channels that have limited capacity (and hence information should be quantized). 
Distributed methods in which nodes use quantized communication yield a solution at the proximity of the optimal solution, hence reaching an error floor that depends on the quantization level used; the finer the quantization the lower the error floor. 
However, it is not possible to determine in advance the optimal quantization level that ensures specific performance guarantees (such as achieving an error floor below a predefined threshold).
Choosing a very small quantization level that would guarantee the desired performance, requires {information} packets of very large size, which is not desirable (could increase the probability of packet losses, increase delays, etc) and often not feasible due to the limited capacity of the channels available.
In order to obtain a communication-efficient distributed solution and a sufficiently close proximity to the optimal solution, we propose a quantized distributed optimization algorithm that converges in a finite number of steps and is able to adjust the quantization level accordingly. 
The proposed solution uses a finite-time distributed optimization protocol to find a solution to the problem for a given quantization level in a finite number of steps and keeps refining the quantization level until the difference in the solution between two successive solutions with different quantization levels is below a certain pre-specified threshold. 
Therefore, the proposed algorithm progressively refines the quantization level, thus eventually achieving low error floor with a reduced communication burden.
The performance gains of the proposed algorithm are demonstrated via illustrative examples.
\end{abstract}


%
%
%
%
\section{Introduction}\label{sec:intro}

The problem of distributed optimization has become increasingly important in recent years due to the rise of large-scale machine learning \cite{2020:Nedich}, control \cite{SEYBOTH:2013}, and other data-driven applications \cite{2018:Stich_Jaggi} that involve massive amounts of data. 

Most distributed optimization algorithms in current literature assume that nodes exchange real valued messages of infinite precision \cite{2009:Nedic_Optim, 2018:Khan_AB, 2020:Vivek_Salapaka, 2021:Tiancheng_Uribe, 2022:Wei_Themis_CDC}. 
In distributed computing settings, nodes typically communicate with each other over a network that has limited communication bandwidth and latency. 
This means that exchanging messages with infinite precision can be impractical or even impossible. 
More specifically, the assumption of infinite-capacity communication channels is unrealistic because it requires the ability to transmit an infinite number of bits per second. 
Additionally, most distributed algorithms assume the transmission of rational numbers, which however, is only possible over infinite-capacity communication channels. 

In order to alleviate the aforementioned limiting assumption, researchers have focused on the scenario where nodes are exchanging quantized\footnote{Quantization is the process of mapping input values from a large set (often a continuous set) to output values in a (countable) smaller set. In quantization, nodes compress  (i.e., quantize) their value (of their state or any other stored information), so that they can represent it with a few bits and then transmit it through the channel.} messages \cite{2014:Peng_Yiguan, 2016:Huang_Xiao, 2017:Ye_Zeilinger_Jones, 2018:Huaqing_Xie, 2021:Doan_Romberg, 2020:Magnusson_NaLi, 2021:Jhunjhunwala_Eldar, 2021:Kajiyama_Takai, 2022:Liu_Daniel,  2019:Koloskova_Jaggi, 2019:Basu_Diggavi, 2020:Jadbabaie_Federated, 2020:Li_Chi}. 
This may lead to a solution to the proximity of the optimal solution that depends on the utilized quantization level. 
However, most of the proposed {works} are mainly quantizing values of an asymptotic coordination algorithm. 
As a consequence, they are only able to exhibit asymptotic convergence to a solution in the proximity of the optimal solution. 


A recent work \cite{2023:Rikos_Johan_IFAC} proposed a finite-time communication-efficient algorithm for distributed optimization. However, it is not obvious how coarse/fine the quantization should be. If it is too coarse, the solution to the optimization may lead to an error floor that is considerably large and hence, unacceptable (for the considered application). If it is too fine, then larger packets are needed for communication (which means that the overall system may experience delays, more packet losses, etc). Since the exact solution is not known \emph{a priori} though, it is not possible to know whether the quantization level chosen is sufficient. 


\textbf{Main Contributions.} 
In this paper, we present a novel distributed optimization algorithm aimed at addressing the challenge of quantization level tuning. 
Our proposed algorithm extends the quantized distributed optimization method in~\cite{2023:Rikos_Johan_IFAC} (which converges to an approximate solution within a finite number of iterations).  
Our key contribution is a strategy that dynamically adjusts the quantization level based on the comparison of error floors resulting from different quantization levels. 
The proposed strategy allows us to assess the satisfaction of the obtained solution, even in the absence of knowledge about the optimal solution. 
Our key contributions are the following. 
\\ \noindent 
\textbf{A.} We present a distributed optimization algorithm that leverages on gradient descent and fosters efficient communication among nodes through the use of quantized messages; see Algorithm~\ref{alg1}. 
Our algorithm operates by comparing solutions obtained with different quantization levels. 
If these solutions exceed a predefined threshold, we continue to refine the quantization level, otherwise, we terminate its operation; see for example Fig.~\ref{plot_optimality}. 
While we cannot directly enforce the exact desired accuracy, our algorithm can attain a desired level of accuracy through the selection of an appropriate threshold. 
For example, by setting the threshold in the order of $10^{-7}$, we can guarantee an error floor as low as $10^{-6}$.
Remarkably, with each iteration of the optimization process, the quantization granularity becomes finer, and the initial conditions approach the vicinity of the optimal state.  
This behavior resembles a distributed zooming process over the optimization region.
\\ \noindent
\textbf{B.} We validate the performance of our proposed algorithm through illustrative examples, demonstrating its effectiveness in terms of communication efficiency and the computation of optimal solutions; see Section~\ref{sec:results}. 
The achieved improvement in communication efficiency is substantial and holds practical significance; see Remark~\ref{commun_contr_remark}.

\section{NOTATION AND PRELIMINARIES}\label{sec:preliminaries}

\textbf{Notions.} 
The sets of real, rational, integer and natural numbers are denoted by $ \mathbb{R}, \mathbb{Q}, \mathbb{Z}$ and $\mathbb{N}$, respectively. 
The symbol $\mathbb{Z}_{\geq 0}$ denotes the set of nonnegative integer numbers. 
The symbol $\mathbb{R}_{\geq 0}$ denotes the set of nonnegative real numbers. 
The symbol $\mathbb{R}^n_{\geq 0}$ denotes the nonnegative orthant of the $n$-dimensional real space $\mathbb{R}^n$. 
Matrices are denoted with capital letters (e.g., $A$), and vectors with small letters (e.g., $x$). 
The transpose of matrix $A$ and vector $x$ are denoted as $A^\top$, $x^\top$, respectively. 
For any real number $a \in \mathbb{R}$, the floor $\lfloor a \rfloor$ denotes the greatest integer less than or equal to $a$ while the ceiling $\lceil a \rceil$ denotes the least integer greater than or equal to $a$. 
For any matrix $A \in \mathbb{R}^{n \times n}$, the $a_{ij}$ denotes the entry in row $i$ and column $j$. 
By $\mathbb{1}$, we denote the all-ones vector and by $\mathbb{I}$ the identity matrix of appropriate dimensions. 
By $\| \cdot \|$, we denote the Euclidean norm of a vector. 


\textbf{Graph Theory.} 
The communication network is captured by a directed graph (digraph) defined as $\mathcal{G} = (\mathcal{V}, \mathcal{E})$. 
This digraph consists of $n$ ($n \geq 2$) nodes communicating only with their immediate neighbors, and is static (i.e., it does not change over time). 
In $\mathcal{G}$, the set of nodes is denoted as $\mathcal{V} =  \{ v_1, v_2, ..., v_n \}$, and the set of edges as {$\mathcal{E} \subseteq \mathcal{V} \times \mathcal{V} \setminus \{ (v_i, v_i) \ | \ v_i \in \mathcal{V} \}$ (note that self-edges are excluded)}. 
The cardinality of the sets of nodes, edges are denoted as $| \mathcal{V} |  = n$, $| \mathcal{E} | = m$, respectively. 
A directed edge from node $v_i$ to node $v_l$ is denoted by $(v_l, v_i) \in \mathcal{E}$, and captures the fact that node $v_l$ can receive information from node $v_i$ (but not the other way around). 
The subset of nodes that can directly transmit information to node $v_i$ is called the set of in-neighbors of $v_i$ and is represented by $\mathcal{N}_i^- = \{ v_j \in \mathcal{V} \; | \; (v_i, v_j)\in \mathcal{E}\}$. 
The subset of nodes that can directly receive information from nodes $v_i$ is called the set of out-neighbors of $v_i$ and is represented by $\mathcal{N}_i^+ = \{ v_l \in \mathcal{V} \; | \; (v_l, v_i)\in \mathcal{E}\}$. 
The \textit{in-degree}, and \textit{out-degree} of $v_j$ and is denoted by $\mathcal{D}_i^- = | \mathcal{N}_i^- |$, $\mathcal{D}_i^+ = | \mathcal{N}_i^+ |$, respectively. 
The diameter $D$ of a digraph is the longest shortest path between any two nodes $v_l, v_i \in \mathcal{V}$. 
A directed \textit{path} from $v_i$ to $v_l$ of length $t$ exists if we can find a sequence of nodes $i \equiv l_0,l_1, \dots, l_t \equiv l$ such that $(l_{\tau+1},l_{\tau}) \in \mathcal{E}$ for $ \tau = 0, 1, \dots , t-1$. 
A digraph is \textit{strongly connected} if there exists a directed path from every node $v_i$ to every node $v_l$, for every $v_i, v_l \in \mathcal{V}$. 

\vspace{.2cm}

\textbf{Node Operation.} 
{Each node $v_i \in \mathcal{V}$ executes a distributed optimization algorithm and a distributed coordination algorithm. 
For the optimization algorithm (see Algorithm~\ref{alg1} (GraDeZoQuC) below) at each time step $k$, each node $v_i$ maintains 
\begin{itemize}
    \item its local estimate variable $x_i^{[k]} \in \mathbb{Q}$ (used to calculate the optimal solution),\
    \item $\gamma_\beta$ which is the time step during which nodes have converged to a neighborhood of the optimal solution, 
    \item the set $S_i$ which is used to store the $\gamma_\beta$, 
    \item the variable $\text{ind}_i$ (used as an indicator of the length of the set $S_i$), 
    \item the variable $\text{flag}_i$ (used to decide whether to terminate the optimization algorithm operation). 
\end{itemize}
For the coordination algorithm (Algorithm~\ref{alg2} (FiTQuAC) below) at each time step $k$, each node $v_i$ maintains 
\begin{itemize}
    \item the stopping variables $M_i$, $m_i \in \mathbb{N}$ (used to determine whether convergence has been achieved), and
    \item the variables $y_i \in \mathbb{Q}$, $c^y_i, c_i^z \in \mathbb{Z}$, and $z_i \in \mathbb{Q}$, (used to communicate with other nodes by either transmitting or receiving messages). 
\end{itemize}}


\textbf{Asymmetric Quantizers.} 
Quantization is a strategy that lessens the number of bits needed to represent information. 
It is used to compress data before transmission, thus reducing the amount of bandwidth required to transmit messages, and increasing power and computation efficiency. 
Quantization is mainly used to describe communication constraints and imperfect information exchanges between nodes such as in wireless communication systems, distributed control systems, and sensor networks. 
The three main types of quantizers are (i) asymmetric, (ii) uniform, and (iii) logarithmic \cite{2019:Wei_Johansson}. 
In this paper we rely on asymmetric quantizers in order to reduce the required communication bandwidth (but our results can also be extended to logarithmic and uniform quantizers). 
Asymmetric quantizers are defined as 
\begin{equation}\label{asy_quant_defn}
    q_{\Delta}^a(\xi) = \Bigl \lfloor \frac{\xi}{\Delta} \Bigr \rfloor , 
\end{equation}
where $\Delta \in \mathbb{Q}$ is the quantization level, $\xi \in \mathbb{R}$ is the value to be quantized, and $q_{\Delta}^a(\xi) \in \mathbb{Q}$ is the quantized version of $\xi$ with quantization level $\Delta$ (note that the superscript ``$a$'' indicates that the quantizer is asymmetric.). 


The $\max$-consensus algorithm converges to the maximum value among all nodes in a finite number of steps $s_m \leq D$, where $D$ is the network diameter (see, \cite[Theorem 5.4]{2013:Giannini}). 
Similar results hold for the $\min$-consensus algorithm.

%
%
%
%
\section{Problem Formulation}\label{sec:probForm}

Let us consider a distributed network modeled as a digraph $\mathcal{G} = (\mathcal{V}, \mathcal{E})$ with $n  = | \mathcal{V} |$ nodes. 
We assume that 
each node $v_i$ is endowed with a local cost function $f_i(x): \mathbb{R}^p \mapsto \mathbb{R}$ only known to itself, and
communication channels among nodes have limited capacity and as a result the exact states cannot be communicated if they are irrational. In other words, only quantized values can be transmitted/communicated and thus $x$ can take values that can be expressed as rational numbers.
 
In this paper we aim to develop a distributed algorithm which allows nodes, despite the communication limitations, to cooperatively solve approximately the following optimization problem, herein called \textbf{P1}: 
\begin{subequations}
\begin{align}
\min_{x\in \mathcal{X}}~ & F(x_1, x_2, ..., x_n) \equiv \sum_{i=1}^n f_i(x_i), \label{Global_cost_function}  \\
\text{s.t.}~ & x_i = x_j, \forall v_i, v_j \in \mathcal{V}, \label{constr_same_x}  \\
       & x_i^{[0]} \in \mathcal{X} \subset \mathbb{Q}_{\geq 0}, \forall v_i \in \mathcal{V}, \label{constr_x_in_X} \\
       & \text{nodes communicate with quantized values, } \label{constr_quant}  \\
       & \text{if} \ \| f_i(x_i^{[\gamma_{\beta-1}]}) - f_i(x_i^{[\gamma_\beta]}) \| \leq \varepsilon_s, \ \forall v_i \in \mathcal{V}, \nonumber \\
       & \text{for any} \ \varepsilon_s > 0, \text{then terminate operation, } \label{constr_stop} 
\end{align} 
\end{subequations}

where $\beta \in \mathbb{N}$, $\gamma_\beta$ is the optimization convergence point for which we have $f_i(x_i^{[1 + \gamma_{\beta}]}) = f_i(x_i^{[\gamma_\beta]})$ $\forall v_i \in \mathcal{V}$, $\mathcal{X}$ is the set of feasible values of parameter $x$, and $x^*$ is the optimal solution of the optimization problem.  
Eq.~\eqref{Global_cost_function} means that we aim to minimize the global cost function which is defined as the sum of the local cost functions in the network. 
Eq.~\eqref{constr_same_x} means that nodes need to calculate equal optimal solutions. 
Eq.~\eqref{constr_x_in_X} means that the initial estimations of nodes belong in a common set. 
Note that it is not necessary for the initial values of nodes to be rational numbers, i.e., $x_i^{[0]} \in \mathcal{X} \subset \mathbb{Q}_{\geq 0}$. 
However, nodes can generate a quantized version of their initial states by utilizing the Asymmetric Quantizer presented in Section~\ref{sec:preliminaries}.
Eq.~\eqref{constr_quant} means that nodes are transmitting and receiving quantized values with their neighbors since communication channels among nodes have limited bandwidth. 
Eq.~\eqref{constr_stop} means that nodes are tracking the improvement of their local cost function between two consecutive convergence points $\gamma_{\beta + 1}$ and $\gamma_{\beta}$. 
If the improvement of the local cost function of every node is less than a predefined threshold $\varepsilon_s$, then they decide to stop their operation in a distributed way. 



\begin{remark}
It will be shown later that our algorithm converges to a neighborhood of the optimal solution due to the quantized communication between nodes (see \eqref{constr_quant}). 
Therefore, with $\gamma_\beta$ we denote the time step for which all nodes have converged to this neighborhood (i.e., it is the optimization convergence point), and for this reason $f_i(x_i^{[1 + \gamma_{\beta}]}) = f_i(x_i^{[\gamma_\beta]}), \forall v_i \in \mathcal{V}$. 
\end{remark}

%
%
%
%
\section{Distributed Optimization with Zooming over Quantized Communication}\label{sec:distr_grad_zoom_quant}

In this section, we present a distributed algorithm which solves problem \textbf{P1} described in Section~\ref{sec:probForm}. 
Before presenting the operation of our proposed algorithm, we make the following assumptions which are necessary for the development of our results. 

\begin{assum}\label{str_conn}
The communication network (described as a digraph) $\mathcal{G}$ is \textit{strongly connected}. 
\end{assum}

\begin{assum}\label{assup_convex}
    For every node $v_i$, the local cost function $f_i(x)$ is smooth and strongly convex. 
    This means that for every node $v_i$, for every $x_1, x_2 \in \mathcal{X}$, 
    \begin{list4}
        \item there exists positive constant $L_i$ such that  \begin{equation}\label{lipschitz_defn}
            \| \nabla f_i(x_1) - \nabla f_i(x_2) \|_2 \leq L_i \| x_1 - x_2 \|_2, 
        \end{equation}
        \item there exists positive constant $\mu_i$ such that 
        \begin{equation}\label{str_conv_defn}
             f_i(x_2) \geq f_i(x_1) + \nabla f_i(x_1)^\top (x_2 - x_1) + \frac{\mu_i}{2} \| x_2 - x_1 \|_2^2. 
        \end{equation}
    \end{list4}
        
\noindent This means that the Lipschitz-continuity and strong-convexity constants of the global cost function $F$ (see \eqref{Global_cost_function}) are $L$ $\mu$, {defined as $L = \max \{ L_i\}$, and $\mu = \min \{ \mu_i\}$.} 
\end{assum}

\begin{assum}\label{digr_diam}
The diameter $D$ (or an upper bound) is known to every node $v_i$ in the network. 
\end{assum}



Assumption~\ref{str_conn} is a necessary condition so that information from each node can reach every other node in the network, thus all nodes to be able to calculate the optimal solution $x^*$ of $P1$. 
Assumption~\ref{assup_convex} is the Lipschitz-continuity condition in \eqref{lipschitz_defn}, and strong-convexity condition in \eqref{str_conv_defn}. 
Lipschitz-continuity is a standard assumption in distributed first-order optimization problems (see \cite{2018:Xie, 2018:Li_Quannan}) and guarantees (i) the existence of the solution $x^*$, and (ii) that nodes are able to calculate the global optimal minimizer $x^*$ for \eqref{Global_cost_function}. 
Strong-convexity is useful for guaranteeing (i) linear convergence rate, and (ii) that the global function $F$ has no more than one global minimum. 
Assumption~\ref{digr_diam} allows each node $v_i \in \mathcal{V}$ to determine whether calculation of a solution $x_i$ that fulfills \eqref{constr_same_x} has been achieved in a distributed manner. 


{The intuition of Algorithm~\ref{alg1} (GraDeZoQuC) is the following. 
\\ 
\indent
\textit{Initialization.} Each node $v_i$ maintains an estimate of the optimal solution $x_i^{[0]}$, the desired quantization level $\Delta$, and the refinement constant $c_r$ which is used to refine the quantization level. 
Quantization level (i) is the same for every node, (ii) allows quantized communication between nodes, and (iii) determines the desired level of precision of the solution. 
Additionally, each node initializes a set $S_i$. 
This set serves as a repository for storing the time steps during which nodes have collectively calculated the neighborhood of the optimal solution according to the utilized quantization level $\Delta$.
More specifically, Algorithm~\ref{alg1} converges to a neighborhood of the optimal solution due to the quantized communication between nodes. 
Each node $v_i$ stores in $S_i$ the optimization time step during which this neighborhood has been reached. 
\\ \indent
\textit{Iteration.} At each time step $k$, each node $v_i$: 
\begin{itemize}
    \item Updates the estimate of the optimal solution $x_i^{[k+\frac{1}{2}]}$ by performing a gradient descent step towards the negative direction the node's gradient; see Iteration step~$1$.
    \item Utilizes Algorithm~\ref{alg2} (FiTQuAC); see Iteration step~$2$.
    Algorithm~\ref{alg2} (details of its operation are presented below) allows each node to fulfill \eqref{constr_quant}, and to calculate in finite time an estimate of the optimal solution $x_i^{[k+1]}$ that fulfills \eqref{constr_same_x}. 
    \item Checks if the calculated estimate of the optimal solution $x_i^{[k+1]}$ is the same as the previous optimization step $x_i^{[k]}$; see Iteration step~$3$.
    \item If the above condition holds, then nodes have reached a neighborhood of the optimal solution which depends on the utilized quantization level (i.e., they reached the optimization convergence point for the current quantization level).
    In this case, node $v_i$ stores the corresponding time step $\gamma_{\beta} = k$ at the set $S_i$; see Iteration steps~$3a$, $3b$.
    \item Checks if the difference between the value of its local function at the current optimization convergence point $f_i(x_i^{[\gamma_{\beta}]})$ and the value of its local function at the previous optimization convergence point $f_i(x_i^{[\gamma_{\beta-1}]})$ is less than a given threshold $\varepsilon_s$; see Iteration step~$3c$.
    \item If the above condition holds, it sets its voting variable equal to $0$ (otherwise it sets it to $1$). 
    Then nodes are performing a max-Consensus protocol to decide whether they will continue the operation of Algorithm~\ref{alg1}; see Iteration step~$3d$.
    The main idea for executing max-Consensus is that if every node finds that the difference between $f_i(x_i^{[\gamma_{\beta}]})$ and $f_i(x_i^{[\gamma_{\beta-1}]})$ is less than $\varepsilon_s$ (signaling convergence) then nodes opt to halt their operation. 
    \item After executing max-Consensus, if at least one node detects that the difference exceeds $\varepsilon_s$ (indicating a lack of convergence) then nodes utilize the refinement constant $c_r$ to adjust the quantization level and repeat the algorithm's operation accordingly, otherwise the operation is terminated; see Iteration step~$3e$.
\end{itemize}


Algorithm~\ref{alg2} (FiTQuAC) allows each node to be able to calculate the quantized average of each node’s estimate in finite time by processing and transmitting quantized messages, with precision determined by the quantization level. 
FAQuA algorithm utilizes (i) asymmetric quantization, (ii) quantized averaging, and (iii) a stopping strategy. 
{The intuition of Algorithm~\ref{alg2} (FiTQuAC) is the following. 
Initially, each node $v_i$ uses an asymmetric quantizer to quantize its state; see Initialization-step~$2$. 
Then, at each time step $\eta$ each node $v_i$: 
\begin{itemize}
    \item Splits the $y_i$ into $z_i$ equal pieces (the value of some pieces might be greater than others by one); see Iteration-steps~$4.1$, $4.2$. 
    \item Transmits each piece to a randomly selected out-neighbor or to itself; see Iteration-step~$4.3$.  
    \item Receives the pieces transmitted from its in-neighbors, sums them with $y_i$ and $z_i$, and repeats the operation; see Iteration-step~$4.4$. 
\end{itemize}
Finally, every $D$ time steps, each node $v_i$ performs in parallel a max-consensus and a min-consensus operation; see Iteration-steps~$1$, $2$, $5$. 
If the results of the max-consensus and min-consensus have a difference less or equal to one, each node $v_i$ (i) scales the solution according to the quantization level, (ii) stops the operation of Algorithm~\ref{alg2}, (iii) uses the value $x_i^{[k+1]}$ to continue the operation of Algorithm~\ref{alg1}.} 
Algorithm~\ref{alg2} converges in finite time according to \cite[Theorem~$1$]{2021:Rikos_Hadj_Splitting_Autom}.
It is important to note here that Algorithm~\ref{alg2} (FiTQuAC) runs between every two consecutive optimization steps $k$ and $k + 1$ of Algorithm~\ref{alg1} (GraDeZoQuC) (for this reason it uses a different time index $\lambda$ and not $k$ as Algorithm~\ref{alg1}). 

Our proposed algorithm is detailed below as Algorithm~\ref{alg1}.  

\begin{varalgorithm}{1}
\caption{Gradient Descent with Zoomed Quantized Communication (GraDeZoQuC)}
\textbf{Input:} A strongly connected directed graph $\mathcal{G}$ with $n = |\mathcal{V}|$ nodes and $m = |\mathcal{E}|$ edges. 
Static step-size $\alpha \in \mathbb{R}$, digraph diameter $D$, initial value $x_i^{[0]}$, local cost function $f_i$, error bound $\varepsilon_s$, quantization level $\Delta \in \mathbb{Q}$, refinement constant $c_r \in \mathbb{N}$, for every node $v_j \in \mathcal{V}$. 
Assumptions~\ref{str_conn}, \ref{assup_convex}, \ref{digr_diam} hold. 
\\
\textbf{Initialization:} Each node $v_i \in \mathcal{V}$ sets $\text{ind}_i = 0$, $\beta = \text{ind}_i$, $S_i = \{ 0 \}$. \\ 
\textbf{Iteration:} For $k = 0,1,2,\dots$, each node $v_i \in \mathcal{V}$ does the following: 
\begin{list4}
\item[1)] $x_i^{[k+\frac{1}{2}]} =  x_i^{[k]} - \alpha \nabla f_i(x_i^{[k]})$; 
\item[2)] $x_i^{[k+1]} = $ Algorithm~\ref{alg2}($x_i^{[k+\frac{1}{2}]}, D, \Delta $); 
\item[3)] \textbf{if} $x_i^{[k+1]} = x_i^{[k]}$, \textbf{then}
\begin{list4a}
\item[$3a)$] set $\text{ind}_i = \text{ind}_i + 1$, $\beta = \text{ind}_i$, $\gamma_{\beta} = k$; 
\item[$3b)$] set $S_i = S_i \cup \{ \gamma_{\beta} \}$; 
\item[$3c)$] \textbf{if} $\| f_i(x_i^{[\gamma_{\beta-1}]}) - f_i(x_i^{[\gamma_\beta]}) \| \leq \varepsilon_s$, \textbf{then} set $\text{vot}_i = 0$;\\
 \textbf{else} set $\text{vot}_i = 1$; 
\item[$3d)$] $\text{flag}_i$ = max - Consensus ($\text{vot}_i$); 
\item[$3e)$] \textbf{if} $\text{flag}_i = 0$ \textbf{then} terminate operation; \\
 \textbf{else} set $\Delta = \Delta / c_r$ and go to Step~$1$;
\end{list4a} 
\end{list4} 
\textbf{Output:} Each node $v_i \in \mathcal{V}$ calculates $x_i^*$ which solves problem \textbf{P1} in Section~\ref{sec:probForm}. 
\label{alg1} 
\end{varalgorithm}

\noindent
\vspace{-0.3cm}    
\begin{varalgorithm}{2}
\caption{Finite-Time Quantized Average Consensus (FiTQuAC)}
\textbf{Input:} $x_i^{[k+\frac{1}{2}]}, D, \Delta$. 
\\
\textbf{Initialization:} Each node $v_i \in \mathcal{V}$ does the following: 
\begin{list4}
\item[$1)$] Assigns probability $b_{li}$ to each out-neigbor $v_l \in \mathcal{N}^+_i \cup \{v_i\}$, as follows
\begin{align*}
b_{li} = \left\{ \begin{array}{ll}
         \frac{1}{1 + \mathcal{D}_i^+}, & \mbox{if $l = i$ or $v_{l} \in \mathcal{N}_i^+$,} \\
         0, & \mbox{if $l \neq i$ and $v_{l} \notin \mathcal{N}_i^+$;}\end{array} \right. 
\end{align*} 
\item[$2)$] sets $z_i = 2$, $y_i = 2 \  q_{\Delta}^a(x_i^{[k+\frac{1}{2}]})$ (see \eqref{asy_quant_defn}); 
\end{list4} 
\textbf{Iteration:} For $\lambda = 1,2,\dots$, each node $v_i \in \mathcal{V}$, does: 
\begin{list4a}
\item[$1)$] \textbf{if} $\lambda \mod (D) = 1$ \textbf{then} $M_i = \lceil y_i  / z_i \rceil$, $m_i = \lfloor y_i / z_i \rfloor$; 
\item[$2)$] broadcasts $M_i$, $m_i$ to every $v_{l} \in \mathcal{N}_i^+$; receives $M_j$, $m_j$ from every $v_{j} \in \mathcal{N}_i^-$; sets $M_i = \max_{v_{j} \in \mathcal{N}_i^-\cup \{ v_i \}} M_j$, \\ $m_i = \min_{v_{j} \in \mathcal{N}_i^-\cup \{ v_i \}} m_j$; 
\item[$3)$] sets $c_i^z = z_i$; 
\item[$4)$] \textbf{while} $c_i^z > 1$ \textbf{do} 
\begin{list4a}
\item[$4.1)$] $c^y_i = \lfloor y_{i} \  / \  z_{i} \rfloor$; 
\item[$4.2)$] sets $y_{i} = y_{i} - c^y_i$, $z_{i} = z_{i} - 1$, and $c_i^z = c_i^z - 1$; 
\item[$4.3)$] transmits $c^y_i$ to randomly chosen out-neighbor $v_l \in \mathcal{N}^+_i \cup \{v_i\}$ according to $b_{li}$; 
\item[$4.4)$] receives $c^y_j$ from $v_j \in \mathcal{N}_i^-$ and sets 
\begin{align}
y_i & = y_i + \sum_{j=1}^{n} w^{[r]}_{\lambda,ij} \ c^y_{j} \ , \\
z_i & = z_i + \sum_{j=1}^{n} w^{[r]}_{\lambda,ij} \ ,
\end{align}
where $w^{[r]}_{\lambda,ij} = 1$ when node $v_i$ receives $c^y_{i}$, $1$ from $v_j$ at time step $\lambda$ (otherwise $w^{[r]}_{\lambda,ij} = 0$ and $v_i$ receives no message at time step $\lambda$ from $v_j$);
\end{list4a} 
\item[$5)$] \textbf{if} $\lambda \mod D = 0$ \textbf{and} $M_i - m_i \leq 1$ \textbf{then} sets $x_i^{[k+1]} = m_i \Delta$ and stops operation. 
\end{list4a}
\textbf{Output:} $x_i^{[k+1]}$. 
\label{alg2} 
\end{varalgorithm}

\subsection{Convergence of Algorithm~\ref{alg1}}\label{ConvADMMAlg}
We now analyze the convergence time of Algorithm~\ref{alg1} via the following theorem. 

\begin{theorem}\label{converge_Alg1}
Under Assumptions~\ref{str_conn}--\ref{digr_diam}, when the step-size $ \alpha $ satisfies $ \alpha \in (\frac{n(\mu + L)}{4\mu L}, \frac{2n}{\mu + L}) $ and $ \delta \in (0,\frac{n[4 \alpha \mu L -n(\mu + L) ]}{2 \alpha [n(\mu + L)-2  \alpha \mu L]} ) $ where  $ L =  \max \{ L_i\} ,  \mu = \min \{ \mu_i\}   $,
	Algorithm~\ref{alg1} generates a sequence of points $ \{x^{[k]}\} $ (i.e., the variable $x_i^{[k]}$ of each node $v_i \in \mathcal{V}$) which satisfies
	\begin{align}\label{linear_convergence}
	\| \hat{x}^{[k+1]} - x^{*}\|^2 
	<
	\vartheta\| \hat{x}^{[k]} - x^{*}  \|^2   + \mathcal{O}(\Delta^2) ,
	\end{align}
	where $ \Delta $ is the quantizer and
	\begin{subequations}\label{throrem1}
		\begin{align}
		\vartheta := & 2(1+\frac{\alpha\delta}{n} )(1- \frac{2\alpha\mu L}{n(\mu + L)} ) \in (0,1), \label{throrem1_a}\\
		 \mathcal{O}(\Delta^2) 
		 =&(8
		 +32 n^2\hat{\alpha}^2 L^2 + \frac{32 n^2\hat{\alpha} L^2}{\delta}) \Delta^2. \label{throrem1_b} 
		\end{align}
	\end{subequations}
\end{theorem}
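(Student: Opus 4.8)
The plan is to follow a perturbed-gradient-descent analysis in the spirit of \cite{2023:Rikos_Johan_IFAC}, for a fixed quantization level $\Delta$ (the refinement $\Delta\mapsto\Delta/c_r$ of Iteration~step~$3e$ only resets the bound~\eqref{linear_convergence} with a smaller $\Delta$). One outer iteration of Algorithm~\ref{alg1} is the local gradient step (Iteration~step~1) composed with the finite-time quantized averaging of Algorithm~\ref{alg2} (Iteration~step~2); I would track how this composition acts on the averaged iterate $\hat{x}^{[k]} := \tfrac{1}{n}\sum_{v_i\in\mathcal{V}} x_i^{[k]}$ and close a one-step inequality of the form~\eqref{linear_convergence}. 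Throughout, $\hat{\alpha} := \alpha/n$ is the effective step-size seen by the averaged dynamics.

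The first step is to characterize the output of Algorithm~\ref{alg2}. By \cite[Theorem~1]{2021:Rikos_Hadj_Splitting_Autom}, the quantized-averaging iteration terminates in finite time with every node holding $x_i^{[k+1]} = m_i\Delta$ for integers $m_i$ with $\max_j m_j - m_i \le 1$; combined with the mass-conservation invariants $\sum_i y_i = 2\sum_i q^a_\Delta(x_i^{[k+\frac{1}{2}]})$ and $\sum_i z_i = 2n$ maintained by Iteration-steps~$4.1$--$4.4$, this yields $x_i^{[k+1]} = \tfrac{1}{n}\sum_{j} x_j^{[k+\frac{1}{2}]} + e_i^{[k]}$ with $\|e_i^{[k]}\| \le c\Delta$ for an absolute constant $c$, the error stemming from the floor in the asymmetric quantizer~\eqref{asy_quant_defn} and from the residual unit gap between the $\max$- and $\min$-consensus values. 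Averaging over $v_i$,
\[
\hat{x}^{[k+1]} = \hat{x}^{[k]} - \hat{\alpha}\sum_{v_i\in\mathcal{V}} \nabla f_i\big(x_i^{[k]}\big) + \bar{e}^{[k]}, \qquad \|\bar{e}^{[k]}\| \le c\Delta ,
\]
and moreover $\|x_i^{[k]} - \hat{x}^{[k]}\| \le \Delta$ for every $v_i$ (after the first outer step), since the $x_i^{[k]}$ are integer multiples of $\Delta$ differing by at most one; i.e., the consensus disagreement is $\mathcal{O}(\Delta)$ at every outer step, which also makes the stopping test of Iteration~step~3 of Algorithm~\ref{alg1} meaningful.

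Next I would replace the gradients at the slightly disagreeing points $x_i^{[k]}$ by gradients at the common point $\hat{x}^{[k]}$: by $L$-smoothness, Eq.~\eqref{lipschitz_defn}, $\|\nabla f_i(x_i^{[k]}) - \nabla f_i(\hat{x}^{[k]})\| \le L\|x_i^{[k]} - \hat{x}^{[k]}\| \le L\Delta$, so $\sum_{v_i}\nabla f_i(x_i^{[k]}) = \nabla\Phi(\hat{x}^{[k]}) + r^{[k]}$ with $\|r^{[k]}\| \le nL\Delta$, where $\Phi := \sum_{v_i} f_i$ (equivalently $F$ restricted to the consensus subspace) is strongly convex and smooth with the constants $\mu,L$ of Assumption~\ref{assup_convex}. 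Writing
\[
\hat{x}^{[k+1]} - x^* = a^{[k]} + \bar{e}^{[k]}, \qquad a^{[k]} := \big(\hat{x}^{[k]} - \hat{\alpha}\nabla\Phi(\hat{x}^{[k]}) - x^*\big) - \hat{\alpha}\, r^{[k]} ,
\]
I would expand $\|\cdot\|^2$ and handle the cross terms in two stages. First, the elementary split $\|a^{[k]}+\bar{e}^{[k]}\|^2 \le 2\|a^{[k]}\|^2 + 2\|\bar{e}^{[k]}\|^2$ detaches the consensus/quantization error (this produces the leading factor $2$ in~\eqref{throrem1_a} and part of the constant $8$ in~\eqref{throrem1_b}). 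Then, on $\|a^{[k]}\|^2$, the weighted Young inequality $\|u+v\|^2 \le (1+\tfrac{\alpha\delta}{n})\|u\|^2 + (1+\tfrac{n}{\alpha\delta})\|v\|^2$ with $u = \hat{x}^{[k]} - \hat{\alpha}\nabla\Phi(\hat{x}^{[k]}) - x^*$ and $v = -\hat{\alpha}r^{[k]}$ separates the exact gradient step from the linearization residual — this is where $\delta$ enters and produces the factor $1+\tfrac{\alpha\delta}{n}$. On the exact gradient step I would invoke the classical contraction for gradient descent on a $\mu$-strongly convex, $L$-smooth function (via co-coercivity of $\nabla\Phi$ and $\nabla\Phi(x^*)=0$), $\|\hat{x}^{[k]} - \hat{\alpha}\nabla\Phi(\hat{x}^{[k]}) - x^*\|^2 \le \big(1-\tfrac{2\hat{\alpha}\mu L}{\mu+L}\big)\|\hat{x}^{[k]} - x^*\|^2 = \big(1-\tfrac{2\alpha\mu L}{n(\mu+L)}\big)\|\hat{x}^{[k]} - x^*\|^2$, valid precisely when $\hat{\alpha}\le\tfrac{2}{\mu+L}$, i.e., $\alpha<\tfrac{2n}{\mu+L}$. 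Multiplying the factors gives $\vartheta$ as in~\eqref{throrem1_a}; one checks the stated constraints on $\alpha$ and $\delta$ are exactly equivalent to $\vartheta = 2(1+\tfrac{\alpha\delta}{n})(1-\tfrac{2\alpha\mu L}{n(\mu+L)}) < 1$ — the lower bound $\alpha>\tfrac{n(\mu+L)}{4\mu L}$ forces $1-\tfrac{2\alpha\mu L}{n(\mu+L)}<\tfrac12$, and the admissible interval for $\delta$ is then the solution set of the resulting linear inequality. Finally, collecting $2\|\bar{e}^{[k]}\|^2$ and $2(1+\tfrac{n}{\alpha\delta})\hat{\alpha}^2\|r^{[k]}\|^2$ and substituting $\|\bar{e}^{[k]}\|\le c\Delta$, $\hat{\alpha}^2\|r^{[k]}\|^2 \le n^2\hat{\alpha}^2 L^2\Delta^2$, I would read off the coefficient $8 + 32 n^2\hat{\alpha}^2 L^2 + \tfrac{32 n^2\hat{\alpha} L^2}{\delta}$ of $\Delta^2$ in~\eqref{throrem1_b} (the numerical constants $8,32$ absorb the absolute constant $c$ in the quantized-consensus error bound and the $2$'s of the splitting).

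The main obstacle is the first step: rigorously bounding the per-node error $e_i^{[k]}$ of the finite-time quantized average-consensus output by an absolute multiple of $\Delta$ — accounting simultaneously for the floor bias in~\eqref{asy_quant_defn} and for the residual $M_i - m_i\le 1$ disagreement guaranteed by \cite[Theorem~1]{2021:Rikos_Hadj_Splitting_Autom} — and then propagating it, together with the $\mathcal{O}(\Delta)$ consensus disagreement, through the gradient step while keeping the dependence on $n,L,\hat{\alpha},\delta$ explicit. Once this bound is in place, reaching the precise constants of~\eqref{throrem1_b} is routine bookkeeping in the two-stage cross-term split.
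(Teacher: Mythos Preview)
Your proposal is correct and mirrors the paper's own approach: the paper simply invokes \cite[Theorem~1]{2023:Rikos_Johan_IFAC} and observes that the zooming of Iteration~step~$3e$ merely restarts the same analysis with a smaller $\Delta$, which is exactly the perturbed-gradient-descent argument you have reconstructed (finite-time quantized-consensus error $\mathcal{O}(\Delta)$, Lipschitz replacement of the per-node gradients, weighted Young split introducing $\delta$, and the co-coercivity contraction for the exact step). The decomposition, the roles of $\hat{\alpha}=\alpha/n$ and $\delta$, and the way the constraints on $\alpha,\delta$ encode $\vartheta<1$ all match the cited result.
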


\begin{proof}
The proof follows directly from the proof of~ \cite[Theorem~1]{2023:Rikos_Johan_IFAC}, with the difference that the process is restarted under some condition (eq.~\eqref{constr_stop}). The details are omitted due to space limitations. 
\end{proof}

\begin{remark}[Convergence Precision]
The focus of our convergence analysis in Theorem~\ref{converge_Alg1} is on the optimization steps performed during the operation of Algorithm~\ref{alg1}. 
As stated, an additional term $\mathcal{O}(\Delta^2)$ appears in \eqref{linear_convergence}. 
This term affects the precision of the calculated optimal solution. 
While some distributed quantized algorithms in the literature exhibit exact convergence to the optimal solution (e.g., see \cite{2014:Peng_Yiguan, 2021:Kajiyama_Takai}), our Algorithm~\ref{alg1} adopts an adaptive quantization level to balance communication efficiency and convergence precision. 
However, by setting $\varepsilon_s = 0$ during Initialization, Algorithm~\ref{alg1} can be adjusted to converge to the \textit{exact} optimal solution $x^*$ (by refining the quantization level infinitely often). 
This characteristic is highly important in scenarios where higher precision is crucial. 
Specifically, Algorithm~\ref{alg1} is able to adjust to specific application requirements by performing a trade-off between communication efficiency and convergence precision. 
Furthermore, it is worth noting that Algorithm~\ref{alg1} offers distinct advantages particularly in scenarios where communication efficiency is a priority while maintaining satisfactory convergence precision in various applications. 
As will be shown in Section~\ref{sec:results}, the operational advantages of Algorithm~\ref{alg1} are evident, making it a valuable tool in distributed optimization tasks. 
\end{remark}

\section{Simulation Results}\label{sec:results}




In this section, we present simulation results in order to demonstrate the operation of Algorithm~\ref{alg1} and its potential advantages. 
More specifically: 
\\ \noindent
\textbf{A.} We focus on a random digraph of $20$ nodes and show how the nodes’ states converge to the optimal solution (see Fig.~\ref{plot_optimality}). 
Furthermore, we analyze how the event-triggered zooming (i) leads to a more precise calculation of the optimal solution, and (ii) allows nodes to terminate their operation.  
\\ \noindent
\textbf{B.} We compare the operation of Algorithm~\ref{alg1} against existing algorithms in the literature, and we emphasize on the introduced improvements (see Fig.~\ref{plot_comparisons}). 

For both cases \textbf{A.} and \textbf{B.} each node $v_i$ is endowed with a local cost function $f_i(x) = \frac{1}{2} \beta_i (x - x_0)^2$. 
This cost function is smooth and strongly convex. 
Furthermore, for $f_i(x)$ we have that (i) $\beta_i$ is initialized as a random integer between $1$ and $5$ for each node in the network (and characterizes the cost sensitivity of node $v_i$), and (ii) $x_0$ is initialized as a random integer between $1$ and $5$ (and represents the demand of node $v_i$). 


\noindent
\textbf{A. Operation over a random digraph of $20$ nodes.} 
In Fig.~\ref{plot_optimality}, we demonstrate our algorithm over a randomly generated digraph consisted of $20$ nodes. 
For each node $v_i$ we have $\alpha = 0.12$, $x_i^{[0]} \in [1, 5]$, $\varepsilon_s = 0.003$, $\Delta = 0.001$, $c_r=10$. 
In Fig.~\ref{plot_optimality}, we plot the error $e^{[k]}$ in a logarithmic scale against the number of iterations. 
The error $e^{[k]}$ is defined as 
\begin{equation}\label{eq:distance_optimal}
    e^{[k]} = \sqrt{ \sum_{j=1}^n \frac{(x_j^{[k]} - x^*)^2}{(x_j^{[0]} - x^*)^2} } , 
\end{equation}
where $x^*$ is the optimal solution of the problem \textbf{P1}. 

\begin{figure}[t]
    \centering
    \includegraphics[width=.88\linewidth]{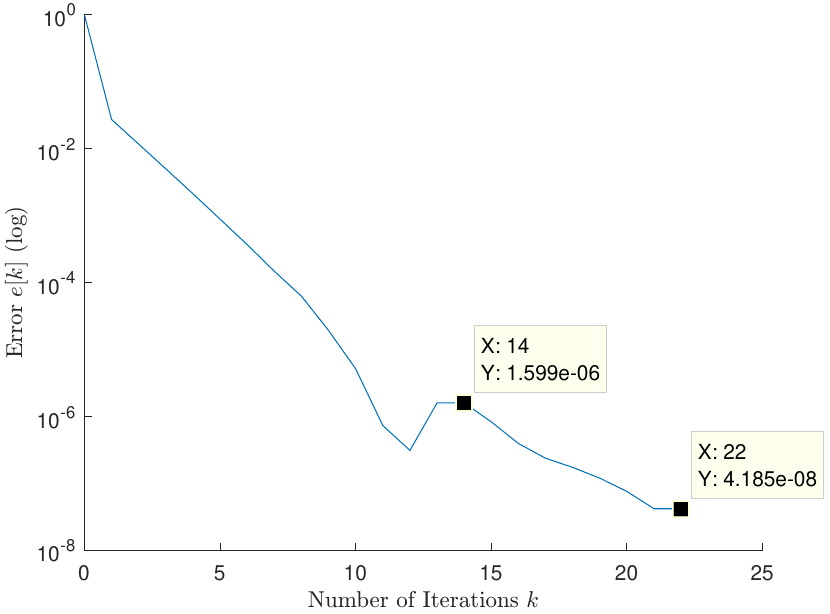}
    \caption{Execution of Algorithm~\ref{alg1} over a random digraph of $20$ nodes.}
    \label{plot_optimality}
\end{figure}

In Fig.~\ref{plot_optimality} we can see that our algorithm is able to converge to the optimal solution. 
Furthermore, let us focus at time steps $k= 13, 14$, and $k= 21, 22$. 
At time steps $k= 13, 14$ we have that the condition in Iteration Step~$3$ holds (i.e., $x_i^{[13]} = x_i^{[14]}$ for every $v_i \in \mathcal{V}$), and $e^{[13]} = e^{[14]}$. 
Therefore, during time step $14$, nodes are checking the overall improvement of their local cost functions (i.e., Iteration Step~$3c$). 
Since this condition does not hold for at least one node, they decide to refine the quantization level (i.e., set $\Delta = \Delta / 10 = 0.0001$), and continue executing Algorithm~\ref{alg1}. 
At time steps, $k = 14, ..., 21$, nodes are able to approximate the optimal solution with more precision than before since the precision depends on the quantization level (as we showed in Theorem~\ref{converge_Alg1}).  
At time steps $k= 21, 22$ we have that the condition in Iteration Step~$3$ holds again. 
However, during time step $22$ the overall improvement of every nodes' local cost function is less than the given threshold $\epsilon_s$, i.e., $\| f_i(x_i^{[14]}) - f_i(x_i^{[22]}) \| \leq \varepsilon_s$, for every $v_i \in \mathcal{V}$ (see Iteration Step~$3c$). 
As a result, nodes decide to terminate the operation at time step $k=22$ (see Iteration Step~$3e$). 
Note here that a choice of a smaller $\epsilon_s$ may lead nodes to refine again the quantization level. 
This refinement (i.e, $\Delta \leq 0.00001$) will allow them to approximate the optimal solution with even higher precision. 


\noindent
\textbf{B. Comparison with current literature.} 
In Fig.~\ref{plot_comparisons}, we compare the operation of Algorithm~\ref{alg1} against \cite{2022:Wei_Themis_CDC, 2023:Rikos_Johan_IFAC}. 
We plot the error $e^{[k]}$ defined in \eqref{eq:distance_optimal}. 
For the operation of the three algorithms, for each node $v_i$ we have $\alpha = 0.12$, $x_i^{[0]} \in [1, 5]$, $\varepsilon_s = 27 \cdot 10^{-7}$, $\Delta = 0.001$, $c_r= 10 $ (note that \cite{2022:Wei_Themis_CDC} is not utilizing $\varepsilon_s$, $\Delta$, $c_r$, and \cite{2023:Rikos_Johan_IFAC} is not utilizing $\varepsilon_s$, $c_r$).
Our comparisons focus on:
\\ \noindent
\textbf{B-A.} The convergence of Algorithm~\ref{alg1} compared to \cite{2022:Wei_Themis_CDC, 2023:Rikos_Johan_IFAC}. 
\\ \noindent
\textbf{B-B.} The required communication for convergence (in terms of bits per optimization step) of Algorithm~\ref{alg1} compared to \cite{2022:Wei_Themis_CDC, 2023:Rikos_Johan_IFAC}. 

\begin{figure}[t]
    \centering
    \includegraphics[width=.9\linewidth]{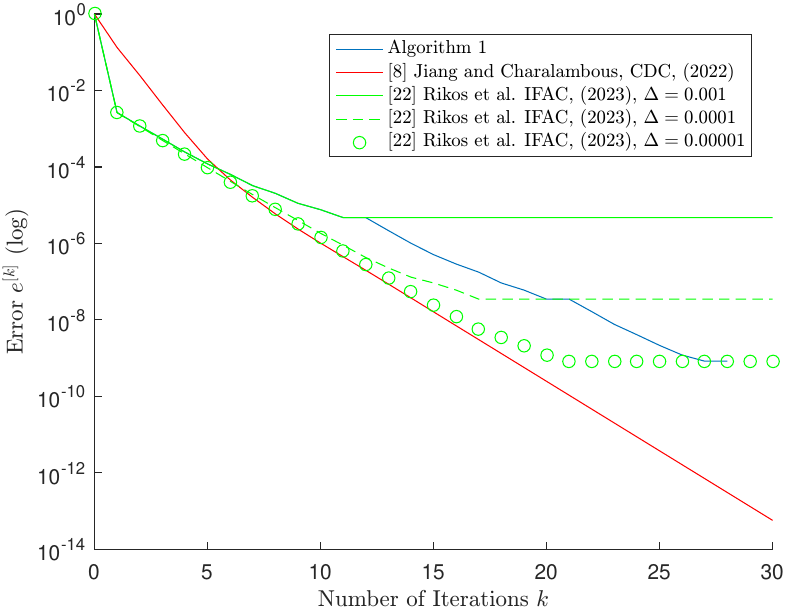}
    \caption{Comparison of Algorithm~\ref{alg1} against \cite{2022:Wei_Themis_CDC, 2023:Rikos_Johan_IFAC} over a random digraph of $20$ nodes.}
    \label{plot_comparisons} 
\end{figure}

\textbf{B-A} (Convergence). In Fig.~\ref{plot_comparisons} we can see that Algorithm~\ref{alg1} converges identically to \cite{2023:Rikos_Johan_IFAC} for optimization steps $k=0, ..., 12$. 
However, at time step $12$, each node refines the quantization level (because the condition at Iteration Step~$3c$ of Algorithm~\ref{alg1} does not hold for at least one node). 
In this case, for time steps $k > 12$ we can see that Algorithm~\ref{alg1} approximates the optimal solution with higher precision than \cite{2023:Rikos_Johan_IFAC}. 
This is mainly because \cite{2023:Rikos_Johan_IFAC} utilizes a static quantization level, which is not refined during the operation of the algorithm. 
Then, at time step $k = 21$, Algorithm~\ref{alg1} refines again the quantization level, obtaining an even more precise estimation of the optimal solution. 
However, at time step $k = 27$, we have that the condition at Iteration Step~$3c$ holds for every node and Algorithm~\ref{alg1} terminates its operation. 
Finally, in Fig.~\ref{plot_comparisons} we can see that \cite{2022:Wei_Themis_CDC} exhibits linear convergence rate and is the fastest among the three algorithms. 
However, during its operation, each node needs to form the Hankel matrix and perform additional computations when the matrix loses rank. 
This requires the exact values from each node. 
It means that nodes need to exchange messages of infinite capacity which is practically infeasible and imposes excessive communication requirements over the network. 
Therefore the main advantage of Algorithm~\ref{alg1} compared to \cite{2022:Wei_Themis_CDC}, is that nodes exchange quantized values guaranteeing efficient communication. 

\textbf{B-B} (Communication). In Fig.~\ref{plot_comparisons}, let us focus on comparing Algorithm~\ref{alg1} with \cite{2023:Rikos_Johan_IFAC} for $\Delta = 0.00001$ (see green circles line in Fig.~\ref{plot_comparisons}). 
Specifically, we will focus on the communication requirements (in terms of total number of bits and bits per optimization time step) for achieving the error $e^{[27]}$ for Algorithm~\ref{alg1} (which is the same as the error $e^{[21]}$ for the algorithm in \cite{2023:Rikos_Johan_IFAC}). 
The communication bits are calculated as the ceiling of the base-$2$ logarithm of the transmitted values. 
For example if node $v_i$ transmits the quantized value $\alpha$, then the number of bits it transmits is equal to $\lceil \log_2( a ) \rceil$. 
Note that comparing Algorithm~\ref{alg1} with \cite{2023:Rikos_Johan_IFAC} for $\Delta = 0.001$, and $\Delta = 0.0001$ can be shown identically. 
In Fig.~\ref{plot_comparisons}, we have that during the operation of \cite{2023:Rikos_Johan_IFAC} for $\Delta = 0.00001$, nodes are utilizing \textit{in total} $800754$ bits for communicating with their neighbors. 
This means that the average communication requirement for each node is $\frac{800754}{(20)(21)} = 1906.55$ bits per optimization time step (since the network consists of $20$ nodes which need $21$ iterations to converge). 
During the operation of Algorithm~\ref{alg1}, nodes are utilizing $\Delta = 0.001$ for steps $k = 0, ..., 12$, $\Delta = 0.0001$ for steps $k = 13, ..., 21$, and $\Delta = 0.00001$ for steps $k = 22, ..., 27$. 
For steps $k = 0, ..., 12$, nodes are utilizing \textit{in total} $195607$ bits for communicating with their neighbors. 
For steps $k = 12, ..., 21$, nodes are utilizing \textit{in total} $215635$ bits for communicating with their neighbors. 
For steps $k = 21, ..., 27$, nodes are utilizing \textit{in total} $201044$ bits for communicating with their neighbors. 
The total requirement of bits is $612286$ for $k=0, ..., 27$. 
This means that the average communication requirement for each node is $\frac{612286}{(20)(27)} = 1133.86$ bits per optimization time step. 
As a result, Algorithm~\ref{alg1}, is able to approximate the optimal solution with precision similar to \cite{2023:Rikos_Johan_IFAC} (for $\Delta = 0.00001$), but its communication requirements are significantly lower in terms of total number of bits and bits per optimization time step. 





\begin{remark}\label{commun_contr_remark}
    During the analysis in \textbf{B-B}, we have that Algorithm~\ref{alg1} requires less bits for communication compared to \cite{2023:Rikos_Johan_IFAC} (for $\Delta = 0.00001$) because nodes are utilizing a higher quantization level than \cite{2023:Rikos_Johan_IFAC} for optimization steps $k = 1, ..., 21$. 
    This means that nodes are utilizing less bits to quantize and transmit their states towards their neighboring nodes. 
    However, note here that during the operation of Algorithm~\ref{alg1} we can further improve communication efficiency by shifting the quantization basis after we refine the quantization step. 
    Shifting the quantization basis means changing the location of the quantization levels relative to the states of the nodes. 
    This can be done by adding/subtracting a constant value to the states of the nodes before quantization. 
    This constant value that we can subtract is equal to the optimal solution to which the states of the nodes have converged before refining the quantization level. 
    For example, in Fig.~\ref{plot_comparisons}, during optimization step $k=15$, node $v_i$ will quantize the state $x_i^{[15]} - x_i^{[\gamma_1]}$ (where $x_i^{[\gamma_1]}$ is equal to $x_i^{[12]}$). 
    This strategy increases even further communication efficiency since the states of the nodes can be represented using fewer bits without sacrificing the accuracy of the calculated optimal solutions during the optimization operation. 
    It will be further analyzed at an extended version of our paper. 
\end{remark}

%
%
%
%

\section{Conclusions}\label{sec:conclusions}


In this paper, we considered an unconstrained distributed strongly convex optimization problem, in which the exchange of information is done over digital channels that have limited capacity (and hence information should be quantized). 
%
We proposed a distributed algorithm that solves the problem with a solution at a close proximity to the optimal, by progressively refining the quantization level of a node, thus guaranteeing a certain error floor and more efficient communication (smaller packets/reduced number of bits).
A simple numerical example shows the performance of our proposed algorithm and highlights the benefits in terms of communication efficiency. More specifically, in the specific example it was shown that the number of bits needed is $\sim 25\%$ less when the quantization level is refined.
\bibliographystyle{IEEEtran}
\bibliography{bibliografia_consensus}

\end{document}